\pdfoutput=1 
\documentclass[a4paper,12pt]{article}
\usepackage{fullpage}
\usepackage{authblk}
\usepackage{amsfonts}

\usepackage{subfigure}
\usepackage{lscape}

\usepackage{verbatim}

\usepackage{url}
\usepackage{color}
\usepackage{epic}
\usepackage{eepic}
\usepackage{amsmath}
\usepackage{amsthm}
\usepackage{graphicx}
%


\newcommand{\Id}[1]{\ensuremath{\text{{\sf #1}}}}

\newcommand{\set}[1]{\left\{ #1\right\}}






\newcommand{\realrange}[2]{\left[#1, #2\right]}

\newcommand{\unitrange}[2]{\realrange{0}{1}}



\newcommand{\prob}[1]{{\mathbf{P}}\left[#1\right]}


\newcommand{\Oh}[1]{\mathcal{O}\!\left( #1\right)}



\newcommand{\llabel}[1]{\label{\labelprefix:#1}}
\newcommand{\labelprefix}{} 

\newcommand{\discussionsize}{\small}

\marginparpush2mm
\marginparsep1mm 
\newcommand{\frage}[1]{{[\bf #1]}\marginpar{$\bigotimes$}}

\newcommand{\punkt}{\enspace .}

\newenvironment{code}{\noindent\normalsize
\begin{tabbing}%
\hspace{2em}\=\hspace{2em}\=\hspace{2em}\=\hspace{2em}\=\hspace{2em}\=%
\hspace{2em}\=\hspace{2em}\=\hspace{2em}\=\hspace{2em}\=\hspace{2em}\=%
\kill}{\end{tabbing}}

\newcommand{\labelcommand}{}
\newcommand{\captiontext}{}
\newsavebox{\codeparam}
\newcounter{lineNumber}
\newenvironment{disscodepos}[3]{%
\renewcommand{\labelcommand}{#2}%
\renewcommand{\captiontext}{#3}%
\sbox{\codeparam}{\parbox{\textwidth}{#3}}%
\begin{figure}[#1]\begin{center}\begin{code}\setcounter{lineNumber}{1}}{%
\end{code}\end{center}\caption{\llabel{\labelcommand}\captiontext}\end{figure}}

{\end{disscodepos}}


\newcommand{\Procedure}{{\bf Procedure\ }}

\newcommand{\Repeat}   {{\bf repeat\ }}
\newcommand{\Until}    {{\bf until\ }}

\newcommand{\Is}{\mbox{\rm := }}

\newcommand{\If}       {{\bf if\ }}

\newcommand{\Then}     {{\bf then\ }}

\newcommand{\Return}   {{\bf return\ }}


\newcommand{\RRem}[1]   {\`{\bf --\hspace{0.5mm}--~}{\rm#1}}



\newdimen\endofsize\endofsize=0.5em

\newcommand{\postponed}[1]{}
\renewcommand{\frage}[1]{}

\newtheorem{theorem}{Theorem}

\newtheorem{corollary}[theorem]{Corollary}

\title{MultiQueues: Simpler, Faster, and Better\\ Relaxed Concurrent Priority Queues}
\author[1]{Hamza Rihani\thanks{\tt hamza.rihani@outlook.com}}
\author[2]{Peter Sanders\thanks{{\tt sanders@kit.edu}}}
\author[3]{Roman Dementiev\thanks{{\tt roman.dementiev@intel.com}}}
\affil[1]{Universit\'e Joseph Fourier Grenoble, France}
\affil[2]{Karlsruhe Institute of Technology, Germany}
\affil[3]{Intel GmbH Munich, Germany}

\begin{document}
\maketitle
\pagestyle{plain}

\begin{abstract}
Priority queues with parallel access are an attractive data structure for applications like prioritized online scheduling, discrete event simulation, or branch-and-bound. However, a classical priority queue constitutes a severe bottleneck in this context, leading to very small throughput. Hence, there has been significant interest in concurrent priority queues with a somewhat relaxed semantics where deleted elements only need to be close to the minimum. In this paper
we present a very simple approach based on multiple sequential priority queues.
It turns out to outperform previous more complicated data structures while at the same time improving the quality of the returned elements.
\end{abstract}

\section{Introduction}
\label{s:intro}

Priority queues are a fundamental data structure with many applications.
Priority queues manage a set of elements and support the operations for inserting elements and deleting the smallest element (\Id{deleteMin}).
Whenever we have to dynamically reorder operations performed by an algorithm, priority queues can turn out to be useful. Examples include graph algorithms for shortest paths and minimum spanning trees, discrete event simulation, best first branch-and-bound, and other best first heuristics. In particular, many successful job scheduling algorithms are of that type. The latter application was the starting point for this work focusing on scheduling of jobs with priorities in data management systems to support quality of service guarantees (Service Level Agreements).

On modern parallel hardware, we often have the situation that $p$ parallel threads (or PEs for \emph{processing   elements}) want to access the priority queue concurrently. This is problematic for several reasons. First of all, even the semantics of a parallel priority queue are unclear. For example, when one PE wants to delete the smallest element, there may be several other PEs in the process of inserting elements. Some of these elements may be smaller than the current minimum. For fundamental physical reasons, there is no way that knowledge about the smallest element can instantaneously propagate through the system. We may try to repair that situation by serializing access to a centralized priority queue, e.g., by locking it. However, this way the priority queue constitutes a serious performance bottleneck. A more attractive way out is to admit that we cannot hope to delete the absolute smallest element and relax the semantics of the \Id{deleteMin} operation allowing it to return also nonminimal elements. These \emph{relaxed priority queues} should support high operation throughput while deleting elements close to the global minimum. 

Consequently, there has recently been intensive work on relaxed priority queues \cite{AKLS14,WVTCT14,HKPSS13}. These works focus on SkipLists and other centralized data structures.  This was surprising to us since in a previous work we experienced that sequential SkipLists are considerably more expensive than classical search trees \cite{DKMS04} which are in turn more expensive than specialized priority queues like heaps. Moreover, previous works on parallel priority queues suggest that a relaxed priority queue could be built from multiple sequential priority queues \cite{DeoPra92,RanEtAl94,San98a}. See Section~\ref{s:related} for a more detailed discussion. In this paper, we present MultiQueue, a simple relaxed priority queue based on these observations. 

The idea for the resulting MultiQueue data structure is very simple.
If there are $p$ parallel threads in the system, we manage an array of $cp$ sequential priority queues for some constant $c>1$. By having considerably more queues than threads, we ensure that contention remains small. Insertions go to random queues. The idea behind this is that each queue should contain a representative sample of the globally present elements. Deleted elements are the minimum of the minima of \emph{two} randomly chosen queues. By choosing two rather than one queue, fluctuations in the distribution of queue elements are stabilized. Refer to Section~\ref{s:multi} for a more detailed discussion as well as Section~\ref{s:implementation} for further implementation details.

Experiments in Section~\ref{s:experiments} indicate that MultiQueues scale excellently on a single socket and reasonably well on a two-socket system. 
In particular, MultiQueues outperforms previous implementations by a factor of two or larger. Moroever, the quality of the deleted elements are higher, i.e., they are closer to the global minimum.
Section~\ref{s:conclusions} summarizes the results and outlines possible further improvements of the MultiQueue idea.

\section{Preliminaries}
\label{s:prelim}

A priority queue $Q$ represents a set of elements. We use $n=|Q|$ for the size of the queue. Classical priority queues support the operations \Id{insert} for inserting an element and \Id{deleteMin} for removing the smallest element. 
The most frequently used sequential priority queue is the binary heap \cite{Wil64}. For large queues, more cache efficient data structures are preferred
\cite{LaMLad97a,San00b}. A simple measure is to increase the degree of the underlying heap and ensure that all successors of a node in the heap reside in the same cache line \cite{LaMLad97a}.

A relaxed priority queue does not require a \Id{deleteMin} operation to return the minimum element. A natural quality criterion is the \emph{rank error} of the deleted elements among all the elements in the queue, i.e., how many elements are smaller than the deleted element. Over the entire use of the queue, one can look at the mean rank, the largest observed rank, or, more generally, the distribution of observed ranks.

\section{More Related Work}
\label{s:related}

There has been considerable work on parallel priority queues in the 1990s \cite{DeoPra92,RanEtAl94,San98a}. 
These queues differ from relaxed priority queues in that they assume synchronized batched operation but they are still relevant as a source of ideas for asynchronous implementations. Indeed, Sanders \cite{San98a} already discusses how these data structures could be made asynchronous in principle: Queue server PEs periodically and collectively extract the globally smallest elements from the queue moving them into a buffer that can be accessed asynchronously. Note that within this buffer, priorities can be ignored since all buffered elements have a low rank. Similarly, an insertion buffer keeps recently inserted elements.
Moreover, the best theoretical  results on parallel priority queues give us an idea how well relaxed priority queues should scale asymptotically. For example, Sanders' queue \cite{San98a} removes the $p$ smallest elements of the queue in time $\Oh{\log n}$. This indicates that worst case rank error \emph{linear} in the number of PEs should be achievable. Current relaxed heaps are a polylogarithmic factor away from this even on the average.

Sanders' queue \cite{San98a} is based on the very simple idea to maintain a local priority queue on each PE and to send inserted elements to random PEs. This idea is actually older, stemming from Karp and Zhang \cite{KarZha93}.
This result could actually be viewed as a relaxed priority queue. Elements are inserted into the queue of a randomly chosen PE. Each PE deletes elements from its local queue. It is shown that this approach leads to only a constant factor more work compared to a sequential algorithm for a branch-and-bound application 
where producing and consuming elements takes constant time. Unfortunately, for a general relaxed priority queue, the Karp Zhang queue \cite{KarZha93} has limitations since slow PEs could ``sit'' on small elements while fast PEs would busily process elements with high rank -- in the worst case, the rank error could grow arbitrarily large.
The MultiQueue we introduce in Section~\ref{s:multi} builds on the Karp Zhang queue \cite{KarZha93}, adapting it to a shared memory setting, decoupling the number of queues from the number of threads, and, most importantly, using a different, more robust protocol for \Id{deleteMin}.

Many previous concurrent priority queues are based on the SkipList data structure \cite{Pug90}. At its bottom, the SkipList is a sorted linked list of elements. Search is accelerated by additional layers of linked lists.
Each list in level $i$ is a random sample of the list in level $i-1$.
Most previous concurrent priority queues delete the exact smallest element \cite{ShaLot00,SunTsi03,LinJon13,CMH14}. This works well if there are not too many concurrent \Id{deleteMin} operations competing for deleting the same elements. However, this inevitably results in heavy contention if very high throughput is required.
The SprayList \cite{AKLS14} reduces contention for \Id{deleteMin} by navigating not to the global minimum but to an element among the $\Oh{p\log^3 p}$ smallest element and deleting it. However, for worst case inputs, insertions can still cause heavy contention. This is a fundamental problem of any data structure that attempts to maintain a single globally sorted sequence. 
Wimmer et al. \cite{WVTCT14} describe a relaxed priority queue for task scheduling based on a hybrid between local and global linked lists and local priority queues. Measurements in Alistarh et al. \cite{AKLS14} indicate that this data structure does not scale as well as SprayLists -- probably due to a frequently accessed central linked list.
Henzinger et al. \cite{HKPSS13} give a formal specification of relaxed priority queues and mention a SkipList based implementation without giving details.
Interestingly, for a relaxed FIFO-queue, the same group proposes a MultiQueue-like structure \cite{HLHPSKS13}.

\section{MultiQueues}
\label{s:multi}

\begin{figure}[b]
\begin{code}
\Procedure insert$(e)$\+\\
  \Repeat\+\\
    $i\Is \Id{uniformRandom}(1..p)$\\
    try to lock $Q[i]$\RRem{e.g. a CAS instruction}\-\\
  \Until lock was successful\\
  $Q[i].\Id{insert}(e)$\\
  $Q[i].\Id{lock}\Is 0$\RRem{unlock}\\
\end{code}
\caption{\label{alg:insertMQ}Insertion into a MultiQueue.}
\end{figure}
\begin{figure}[t]
\begin{code}
\Procedure deleteMin\+\\
  \Repeat\+\\
    $i\Is \Id{uniformRandom}(1..p)$\\
    $j\Is \Id{uniformRandom}(1..p)$\\
    \If $Q[i].\Id{min}>Q[j].\Id{min}$ \Then swap $i$, $j$\\
    try to lock $Q[i]$\RRem{e.g. a CAS instruction}\-\\
  \Until lock was successful\\
  $e\Is Q[i].\Id{deleteMin}$\\
  $Q[i].\Id{lock}\Is 0$\RRem{unlock}\\
  \Return $e$
\end{code}
\caption{\label{alg:deleteMinMQ}\Id{DeleteMin} from a MultiQueue.}
\end{figure}

Our MultiQueue data structure is an array $Q$ of $cp$ sequential priority queues where $c$ is a tuning parameter and $p$ is the number of parallel threads. Access to each queue is protected by a lock flag. \Id{Insert} locks a random unlocked queue $Q[i]$ and inserts the element into $Q[i]$, see Figure~\ref{alg:insertMQ} for pseudocode. Note that this operation is wait-free since we never wait for a locked queue. Since at most $p$ queues can be locked at any time, for $c>1$ we will have constant success probability. Hence, the expected time for locking a queue is constant. Together with the time for insertion we get expected insertion time $\Oh{\log n}$.

A simple implementation of \Id{deleteMin} could look very similar -- lock a random unlocked queue and return its minimal element. This is indeed what we tried first. However, the quality of this approach leaves a lot to be desired. In particular, quality deteriorates not only with $p$ but also with the queue size. One can show that the rank error grows proportional to $\sqrt{n}$ due to random
fluctuations in the number of operations addressing the individual queues.
Therefore we invest slightly more effort into a \Id{deleteMin} by looking at \emph{two} random queues and deleting from the one with smaller minimum. See Figure~\ref{alg:deleteMinMQ} for a simple pseudocode and refer to Section~\ref{s:implementation} for other implementation options.  Our intuition why considering two choices 
may be useful, stems from previous work on randomized load balancing, where it is known that placing a ball on the least loaded of two randomly chosen machine gives a maximum load that is very close to the average load independent of the number of allocated balls \cite{BCSV00}.
The effect of an additional choice on execution time is modest -- we still need only constant expected time to lock the queue we want to access and then spend logarithmic time for the local \Id{deleteMin}. We get the following result on 
running time:

\begin{theorem}
For a MultiQueue with $c>1$, the expected execution time of operations \Id{insert} and \Id{deleteMin} is $\Oh{1}$ plus the time for the sequential queue access.
\end{theorem}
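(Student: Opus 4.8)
The plan is to account separately for three things: the ``retry'' loop, the constant-time locking and unlocking, and the single sequential queue operation at the end. The key observation --- already noted in the text --- is that every thread holds at most one lock at any instant, so at every point in time at most $p$ of the $cp$ queues are locked, hence at least $(c-1)p$ of them are unlocked. Consider \Id{insert} first: one iteration of its loop draws an index $i$ uniformly from $\{1,\dots,cp\}$ and performs a single CAS, which is $\Oh{1}$ work, and the queue $Q[i]$ is unlocked with probability at least $(c-1)p/(cp) = (c-1)/c$, a positive constant because $c>1$. Thus the number of loop iterations is stochastically dominated by a geometric random variable with success probability $(c-1)/c$, so its expectation is at most $c/(c-1) = \Oh{1}$. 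Adding the single sequential \Id{insert} call and the constant-time unlock yields expected time $\Oh{1}$ plus the sequential queue access.

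For \Id{deleteMin} the only change is in the per-iteration success probability. One iteration draws $i$ and $j$ independently and uniformly, reads $Q[i].\Id{min}$ and $Q[j].\Id{min}$ in constant time (a sequential heap exposes its minimum at the root), possibly swaps, and performs one CAS --- still $\Oh{1}$ work. The queue the thread actually attempts to lock is one of $Q[i]$ and $Q[j]$, so it is certainly unlocked whenever \emph{both} of them are unlocked; by independence of the two draws this happens with probability at least $((c-1)/c)^2$, again a positive constant for $c>1$. Hence the iteration count is dominated by a geometric variable with this success probability, its expectation is $(c/(c-1))^2 = \Oh{1}$, and adding the sequential \Id{deleteMin} and the unlock finishes the argument.

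The one point that requires care is that in a concurrent execution the set of locked queues varies over time, so the successes of different iterations are not literally independent events. This is handled by the standard domination argument: at the instant a thread issues the CAS of its current iteration, conditioned on the whole execution history up to that point, at most $p$ queues are locked and the fresh random choices of that iteration are independent of which ones these are, so the conditional success probability is at least the constant computed above --- which is exactly what is needed to dominate the total iteration count by a geometric variable and conclude the $\Oh{1}$ expectation. The mild assumption underlying ``the fresh choices are independent of the current lock set'' is that the scheduler cannot react to a thread's sampled index before that thread performs its CAS. I expect this bookkeeping, rather than any quantitative estimate, to be the main thing one has to be careful about.
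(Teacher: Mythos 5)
Your proposal is correct and follows essentially the same route as the paper: the paper's proof is a one-liner appealing to the earlier observation that at most $p$ of the $cp$ queues can be locked, so for $c>1$ each locking attempt succeeds with constant probability and the expected number of retries is constant. You simply make the geometric-domination and conditioning bookkeeping explicit, which the paper leaves implicit (the paper additionally remarks, without proof, that the bound survives a contention-proportional memory-access model).
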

\begin{proof}
The claim follows immediately from what we said above. This even holds in a realistic model for the time of an atomic memory access where
the access time is proportional to the number of contending PEs --
the expected contention is constant due to randomization.
\end{proof}

An advantage about our wait-free locking is that there is no need for complicated lock-free data structures for the individual queues -- any known sequential data structure can be used. The most obvious choice is to use binary heaps, or, more generally $d$-ary heaps.

\begin{corollary}
MultiQueues with $d$-ary heaps need constant average insertion time and 
expected time $\Oh{\log n}$ per operation for worst case operations sequences.
\end{corollary}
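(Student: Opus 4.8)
The plan is to combine the running-time Theorem with the standard analysis of $d$-ary heaps, checking that nothing in the MultiQueue wrapper changes the per-queue bounds. First I would recall that a $d$-ary heap supporting a set of $m$ elements performs \Id{insert} in time $\Oh{\log_d m}$ and \Id{deleteMin} in time $\Oh{d\log_d m}$ in the worst case; both are $\Oh{\log m}$ for constant $d$. By the Theorem, an \Id{insert} or \Id{deleteMin} on the MultiQueue costs $\Oh{1}$ in expectation plus the cost of one sequential operation on one of the $cp$ component heaps. Since every element ever inserted lives in exactly one component queue, each component heap has size at most $n$ at all times, so a single sequential \Id{deleteMin} costs $\Oh{\log n}$ and a single sequential \Id{insert} costs $\Oh{\log n}$ as well. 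Adding the $\Oh{1}$ locking overhead gives expected time $\Oh{\log n}$ per operation, and this holds for an arbitrary (worst-case) sequence of operations, since the bound on each component's size and the bound on the locking overhead are both worst-case-robust --- the only randomness is in which queue is touched, not in the adversary's choice of operations. That establishes the $\Oh{\log n}$-per-operation claim.

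For the sharper "constant average insertion time" claim I would argue amortized over the operation sequence. The key observation is that in a $d$-ary heap the expensive operation is \Id{deleteMin} (cost $\Oh{d\log_d m}$), while \Id{insert} is cheap in an amortized sense: each inserted element, once placed at a leaf, participates in a sift-up whose cost we can charge against the $\Oh{\log n}$ we are already willing to pay for the matching \Id{deleteMin} that will eventually remove it (or against the final state of the queue if it is never removed). More simply, one can invoke the classical fact that building or incrementally maintaining a heap under $N$ insertions takes $\Oh{N}$ total work when insertions are interleaved only with deletions in the usual way, because the sum of sift-up distances telescopes. Combining this $\Oh{1}$ amortized per-\Id{insert} bound on the sequential side with the $\Oh{1}$ expected locking overhead from the Theorem yields constant average insertion time for the MultiQueue.

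The main obstacle --- really the only subtlety --- is making the amortization argument clean when insertions and deleteMins are arbitrarily interleaved and spread across $cp$ distinct heaps whose sizes fluctuate. One must be slightly careful that the "constant average" is over the whole operation sequence and not per-queue, and that a potential-function argument (potential = total number of elements currently resident, say) is set up so that an \Id{insert} has $\Oh{1}$ amortized cost while a \Id{deleteMin} absorbs the $\Oh{\log n}$. Since the MultiQueue never moves elements between component heaps, the per-heap potentials simply add, so the global potential argument reduces to the familiar single-heap one; I expect this to go through without difficulty once the bookkeeping is written out. The expected-value and amortized aspects compose because the randomness (queue choice, contention) is independent of the amortization (which is purely over the deterministic sequence of sift operations), so no concentration argument is needed --- linearity of expectation suffices.
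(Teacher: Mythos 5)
Your first paragraph---combining the Theorem's $\Oh{1}$ expected locking overhead with the worst-case bounds $\Oh{\log_d m}$ for \Id{insert} and $\Oh{d\log_d m}$ for \Id{deleteMin} on a component heap of size $m\leq n$---is precisely the implicit argument behind the corollary's $\Oh{\log n}$ claim (the paper states the corollary without proof), and that part is fine.

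The gap is in your treatment of ``constant average insertion time.'' The ``classical fact'' you invoke---that maintaining a $d$-ary heap under $N$ insertions takes $\Oh{N}$ total work because sift-up distances telescope---is false: inserting keys in decreasing order into a min-heap forces every insertion to sift all the way to the root, giving $\Theta(N\log N)$ total work with no \Id{deleteMin} operations available to absorb the charge, and ``charging against the final state'' provides no credit that could pay a $\Theta(N\log N)$ bill out of an $\Oh{N}$ budget. (The linear bound you are thinking of is Floyd's bottom-up heap construction, not repeated insertion.) Consequently no potential-function argument can make \Id{insert} $\Oh{1}$ amortized over \emph{worst-case} operation sequences, and indeed the corollary does not claim that: it attaches ``for worst case operations sequences'' only to the $\Oh{\log n}$ bound, while the insertion bound is an \emph{average-case} statement. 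The intended justification is the classical result that inserting a random key into a ($d$-ary) heap climbs only $\Oh{1}$ levels in expectation (the setting of the paper's benchmarks, where keys are drawn uniformly at random), so the sequential insertion cost is constant on average; adding the Theorem's $\Oh{1}$ expected locking overhead---which composes by linearity of expectation, as you note---yields the stated constant average insertion time. So your proposal proves the right thing for the $\Oh{\log n}$ clause but attempts a stronger, false, amortized version of the insertion clause instead of the average-case one.
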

 
An interesting example for a different sequential queue is van Emde Boas trees \cite{Emde77}
which allow us to exploit the structure of integer keys:

\begin{corollary}
MultiQueues with van Emde Boas trees need expected time $\Oh{\log\log U}$ per operation for worst case operations sequences with integer keys in $\set{0,\ldots,U}$.
\end{corollary}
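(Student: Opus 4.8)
The plan is to combine the running-time Theorem with the known complexity of van Emde Boas trees. By the Theorem, each \Id{insert} and each \Id{deleteMin} on the MultiQueue costs expected $\Oh{1}$ for locating and locking a queue, plus the cost of the corresponding operation on the underlying sequential priority queue. So it suffices to check that a van Emde Boas tree over the universe $\set{0,\ldots,U}$ realizes every primitive the MultiQueue needs within worst-case time $\Oh{\log\log U}$, and then substitute.

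First I would recall that a van Emde Boas tree storing a subset of $\set{0,\ldots,U}$ supports \Id{insert}, \Id{delete}, and \Id{deleteMin} in worst-case time $\Oh{\log\log U}$, and that it keeps the current minimum in an explicitly stored field, so the $Q[i].\Id{min}$ lookups performed in the \Id{deleteMin} pseudocode of Figure~\ref{alg:deleteMinMQ} cost only $\Oh{1}$ (a single word read). Plugging these bounds into the Theorem gives expected time $\Oh{1}+\Oh{\log\log U}=\Oh{\log\log U}$ per MultiQueue operation, for an arbitrary, worst-case sequence of operations, which is exactly the claim.

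The only real wrinkle is that a van Emde Boas tree stores a \emph{set} of integer keys, whereas a priority queue must carry payloads and may receive several elements sharing the same key. I would handle this in the standard way: attach to each key currently present in the tree a secondary container (an unsorted list or a stack) holding all elements with that key; \Id{insert} appends to this container, \Id{deleteMin} pops from the container of the minimum key and removes that key from the tree only once its container becomes empty. Every container operation is $\Oh{1}$, so the per-operation bound is unaffected. (One may additionally note that the resulting structure uses $\Oh{U}$ space per queue, but this is not part of the statement.) I do not expect a genuine obstacle here; the corollary is essentially an instantiation of the Theorem, and the bucket bookkeeping is the only point that needs an explicit word.
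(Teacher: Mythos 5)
Your proposal is correct and matches the paper's intent exactly: the corollary is stated as an immediate instantiation of the running-time theorem with the standard $\Oh{\log\log U}$ bounds for van Emde Boas trees, and the paper gives no further argument. Your extra remark about handling duplicate keys and payloads via per-key buckets is a sensible detail the paper leaves implicit, but it does not change the route.
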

Note that van Emde Boas trees can also be practical \cite{DKMS04}.

A note is in order on handling empty queues. If we implement the individual queues so that they contain a sentinel element with key $-\infty$, then a result of $-\infty$ indicates that the overall data structure contains only few elements and may actually be empty. Our experiments simply repeat until an element is found since there it is known that the queue is nonempty. It is an interesting question how to actually and reliably detect a globally empty queue. However, as far as we know most concurrent priority queues, even exact ones, have some issues there. We are not addressing this problem here since we believe that a proper termination detection protocol is often a problem of the application rather than of the priority queue data structure. For example, suppose we are running Dijkstra's shortest path algorithm and some protocol determines that the queue of nodes to be scanned is empty. Then we still cannot stop the search since there might be a thread currently scanning a node which will later insert new elements into the queue.

\subsection{Quality Analysis}\label{ss:analysis}

Unfortunately, we do not have a closed form analysis of the quality of the MultiQueue. However, with simplifying assumptions, we can get a reasonable approximation of what to expect. 

So let us assume for now that all remaining $m$ elements have been allocated uniformly at random to the local queues. This is true when there have been no \Id{deleteMin} operations so far (and the open question is whether the \Id{deleteMin} operations steer the system state away from this situation and whether this is good or bad for quality). Furthermore, let us assume that no queue is locked. 

With these assumptions, the probability to delete an element of rank $i$ is 
$$\prob{\mathrm{rank}=i}=\left(1-\frac{2}{cp}\right)^{i-1}\cdot\frac{2}{cp}$$
The first factor expresses that the $i-1$ elements with smaller ranks are not present at the two chosen queues and the second factor is the probability that the element with rank $i$ \emph{is present}. 
Therefore, the expected rank error in the described situation is  
\begin{equation}\label{eq:eRank}\sum_{i=1}^mi\prob{\mathrm{rank}=i}\leq
  \sum_{i\geq 1}i\left(1-\frac{2}{cp}\right)^{i-1}\cdot\frac{2}{cp}=\frac{c}{2}p
\end{equation}
i.e., linear in $p$.

We can also compute the the cumulative probability that the rank of the deleted element is larger than $k$. This happens when none of the $k$ elements with rank $\leq k$ are present on the two chosen queues. 
\begin{equation}\label{eq:tail}
\prob{\mathrm{rank}>k}=\left(1-\frac{2}{cp}\right)^k\punkt
\end{equation}
$\prob{\mathrm{rank}>k}$ drops to 
$p^{-a}$ for $k=\frac{ca}{2}p\ln p$, i.e., with probability polynomially large in $p$, we have rank error  $\Oh{p\log p}$.

We can also give qualitative arguments how the performed operations change the distribution of the elements. Insertions are random and hence move the system towards a random distribution of elements. DeleteMins are more complicated. 
However, they tend to remove more elements from queues with small keys than from queues with large keys, thus stabilizing the system.

\section{Implementation Details}
\label{s:implementation}

Even when the queue is small, cache efficiency is a major issue for MultiQueues since an access to a queue $Q[i]$ by a PE $j$ will move the cache lines accessed by the operation into the cache of PE $j$. But most likely, some random other PE $j'$ will next need that data causing not only cache misses for $j'$ but also invalidation traffic for $j$. Hence, we need a priority queue whose operations touch only a small number of cache lines for each operation. Our current implementation uses a 8-ary heap from the Boost library ({\tt boost::heap::d\_ary\_heap} \cite{schling2011boost}). This yields up to three times less cache misses than binary heaps and still guarantees efficient worst case access time. This property is also relevant for MultiQueues since an operation that exceptionally takes very long would lock a queue for a long time.  If this queue also contains low rank elements, the rank error of elements deleted in the mean time could become large. In this respect, resizing operations for array based heaps could be a problem. Our implementation allocates enough memory so that resizes are unlikely. However, a more robust implementation might want to preallocate a large amount of virtual memory for each queue. This can be done in such a way that only the actually needed parts are allocated physically \cite{SanWas11}. 

Our \Id{deleteMin} operation samples random queues until it has found two unlocked ones. In order to be able to safely inspect the minimum of a queue without locking it, the minimum of $Q[i]$ is stored redundantly in the same cache line as the lock variable. Updates of the underlying $d$-ary heap update this value whenever necessary. The pseudocode in Figure~\ref{alg:deleteMinMQ} and our implementation tolerates the possibility that the minimum element of $Q[i]$ is deleted immediately before $Q[i]$ is actually locked. In this case, a different, larger element would actually be deleted. This affects quality but does not undermine the correctness of the algorithm. However, we expect that this situation is unlikely in practice. If desired, one can also eliminate this possibility and retry after finding that the locked queue has changed its minimum.  Using a slightly more complicated implementation, one could save a few probes to queues: Rather than looking for two fresh queues when $Q[i]$ turns out to be locked, one could stick to queue $j$ and look for just one fresh queue.

For MultiQueue insertion (see Figure~\ref{alg:insertMQ}) we can avoid some atomic memory access instructions by first reading the lock value and only trying to lock it when the value is 0. However, in our implementation this was only useful for very small values of $c$ so that we immediately try to lock.

\section{Experiments}
\label{s:experiments}

Experiments were performed on a dual socket system with Intel\textsuperscript{\textregistered} Xeon\textsuperscript{\textregistered} CPU E5-2697 v3, 2.60 GHz processors (Haswell). Each socket has 14 cores with two hardware threads, i.e., 56 overall.

We use GCC 4.8.2 with optimization level {\tt -O3}, Boost version 1.56, and Posix threads for parallelization. 

Queue elements are key-value-pairs consisting of two 32 bit integers. 
Initially, the queues are filled with $n_0$ elements with keys uniformly distributed in $\set{0,\ldots,10^8}$ When not otherwise specified, $n_0=10^6$.
Our measurements use threads alternating between \Id{insert} and \Id{deleteMin} operations. Each thread is pinned to a logical core.
Insertions insert elements with keys uniformly selected in $\set{0,\ldots,10^8}$.

\subsection{Throughput}
\begin{figure}
\centering\includegraphics[width=0.8\linewidth]{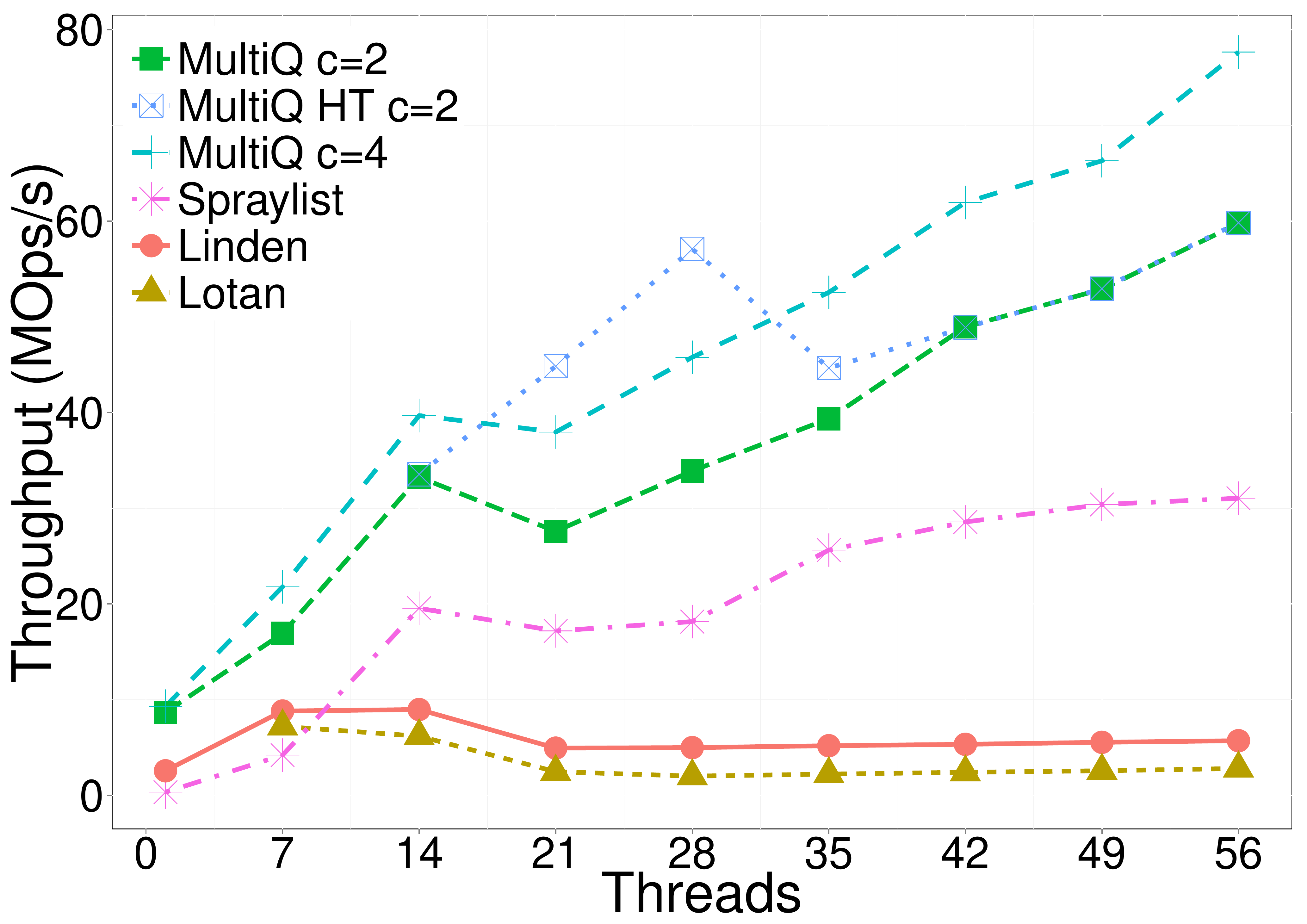}
\caption{Throughput of 50\% \Id{insert} 50\% \Id{deleteMin} operations of uniformly distributed keys} 
\label{fig:throughput}
\end{figure}

\begin{figure}
\centering\includegraphics[width=0.8\linewidth]{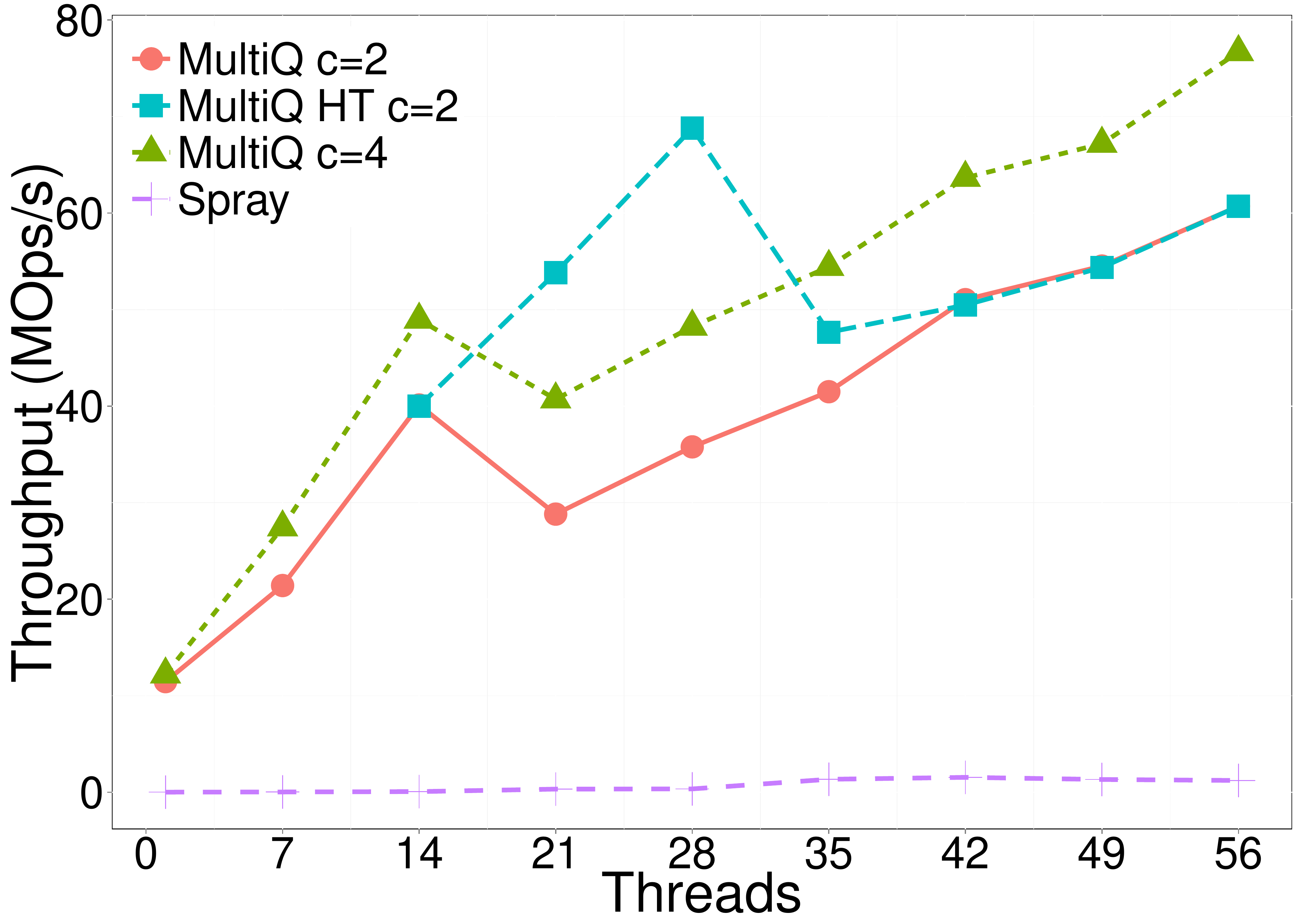}
\caption{Throughput of 50\% \Id{insert} 50\% \Id{deleteMin} operations of monotonic keys}
\label{fig:mono}
\end{figure}

\begin{figure}
\centering\includegraphics[width=0.8\linewidth]{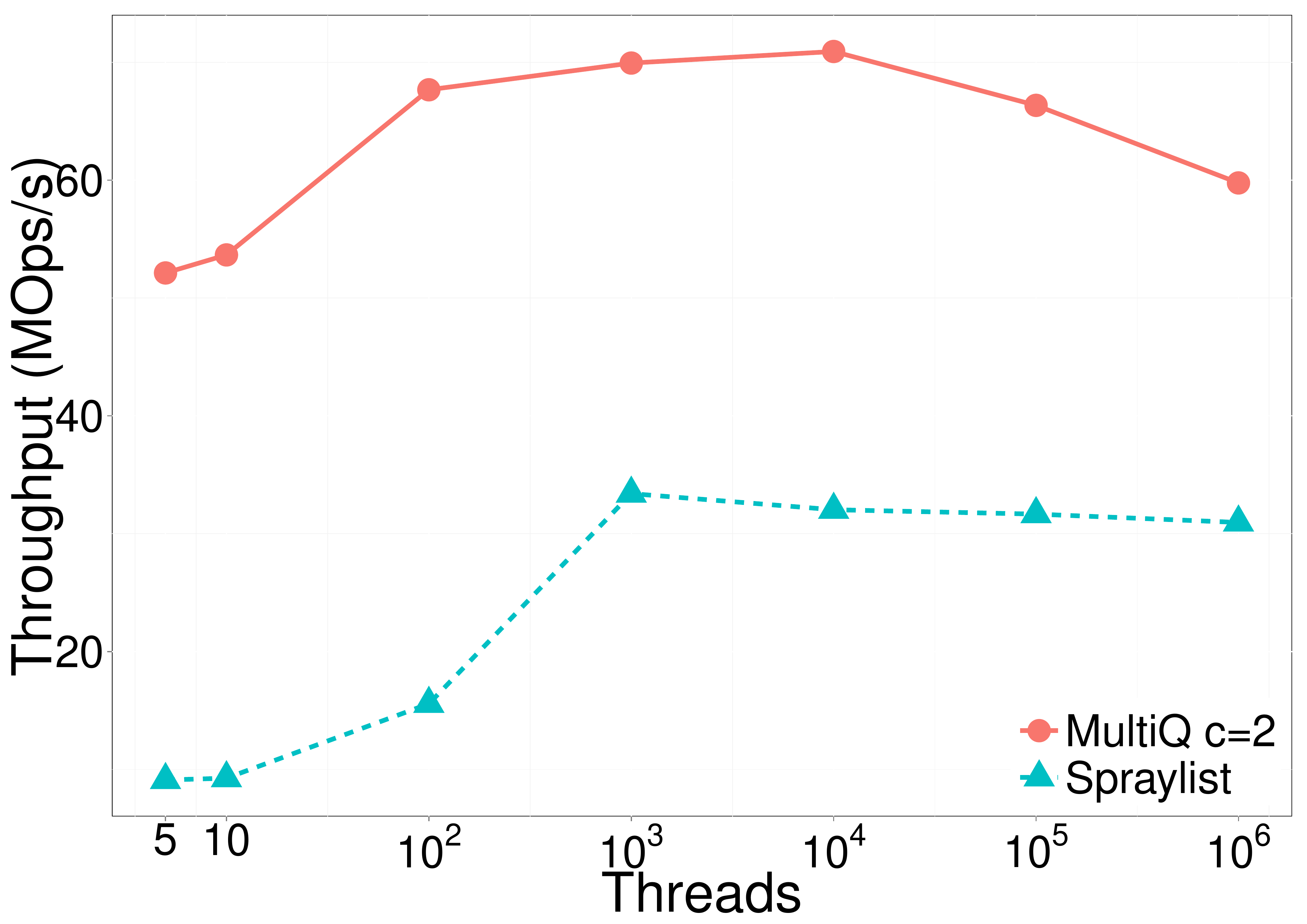}
\caption{Throughput with varying initial size of the priority queue (56 threads)}
\label{fig:size}
\end{figure}

Throughput measurements are run for 1 second and we determine the total number of queue operations completed by all the threads together.
Using this setup, Figure~\ref{fig:throughput} compares
MultiQueues to existing concurrent priority queues which are all based on SkipLists. 
We do not include the queue by Wimmer et al. \cite{WVTCT14} since in Alistarh et al. \cite{AKLS14} its performance is even lower than the SkipList based queues.
Lotan's priority queue \cite{ShaLot00} is a SkipList based priority queue with a lock-free implementation. A \Id{deleteMin} operation will simply
traverse the bottom level and remove the first non-deleted element by setting a marker flag. Physical deletion is done
with a garbage collector. Linden's priority queue \cite{LinJon13} adds some optimizations to minimize CAS operations and
batch the physical deletion of marked nodes. 
These implementations and the implementation of SprayLists are available on the Internet\footnote{\url{github.com/jkopinsky/SprayList} version from March 17, 2014.}. It should be noted however, that the SprayList implementation is not complete since it never cleans up deleted elements. A scalable cleanup component for SprayLists seems to be an open problem.

Among the SkipList based implementations, the SprayList is by far the best except for small number of cores where the Linden queue works better. 

Henceforth, we therefore focus on a comparison with SprayLists.
Multiqueues scale linearly on a single socket ($\leq 14$ threads) being up to 1.7 times faster (for $c=2$) and twice faster (for $c=4$) than SprayList when all cores of the first socket are used. Then there is a dip in performance when going to the second socket, in particular for the case $c=2$ when there is a significant number of failed locking attempts. Using hyperthreading on all cores of a single socket, the overall performance at $p=28$ reaches 1.7 times the value for a single socket (curve label ``HT'').
For $p=56$, MultiQueues are about 1.9 times faster than SprayLists for $c=2$ and about 2.5 times faster for $c=4$. Also recall, that the improvement over a complete implementation of SprayLists is likely to be considerably larger.

\frage{todo: put tic marks on all axes of figures.}
In Figure~\ref{fig:mono} we repeat the measurements for a monotonic distribution of input keys:
After removing an element with key $x$, a thread will insert an element with key $x+y$ where $y$ is a value choosen uniformly at random from $\set{1,\ldots,100}$. Here, SprayLists do not scale at all. Qualitatively, 
it is not surprising that contention increases when inserted elements are close to each other and close to the currently deleted elements. However, it is surprising that performance is collapsing compared to uniformly distributed keys. Note that monotonic behavior of priority queues is typical for many applications, e.g., for discrete event simulation and for finding shortest paths using Dijkstra's algorithm.

In Figure~\ref{fig:size} we show how throughput depends on initial queue size $n_0$. 
For very small sizes, MultiQueues have a sweet spot when average queue sizes are around one, since then the sequential queue access is very fast. For even smaller queues, throughput drops since it takes time to find a nonempty queue. 
SprayLists begin to suffer earlier from small queues  because spraying no longer works -- there are insufficiently many list items available. 

In the Appendix, we give additional performance figures for different hardware.
\subsection{Quality}
\begin{figure}[h]
\centering\includegraphics[width=0.8\linewidth]{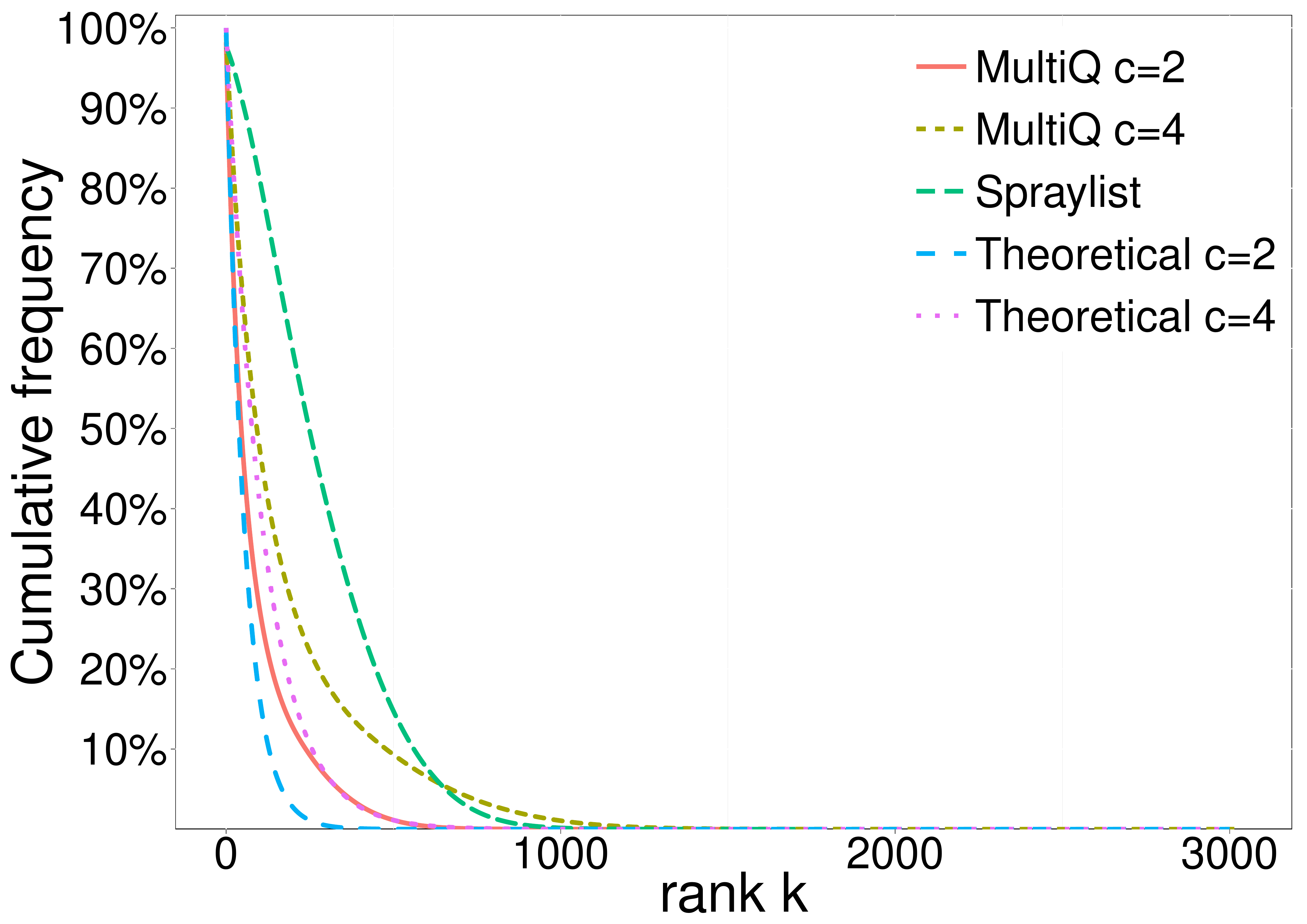}
\caption{Tail distribution of rank error $\prob{\mathrm{rank}>k}$. $10^7$ operations and 56 threads}
\label{fig:quality}
\end{figure}
\begin{figure}[h]
\centering\includegraphics[width=0.8\linewidth]{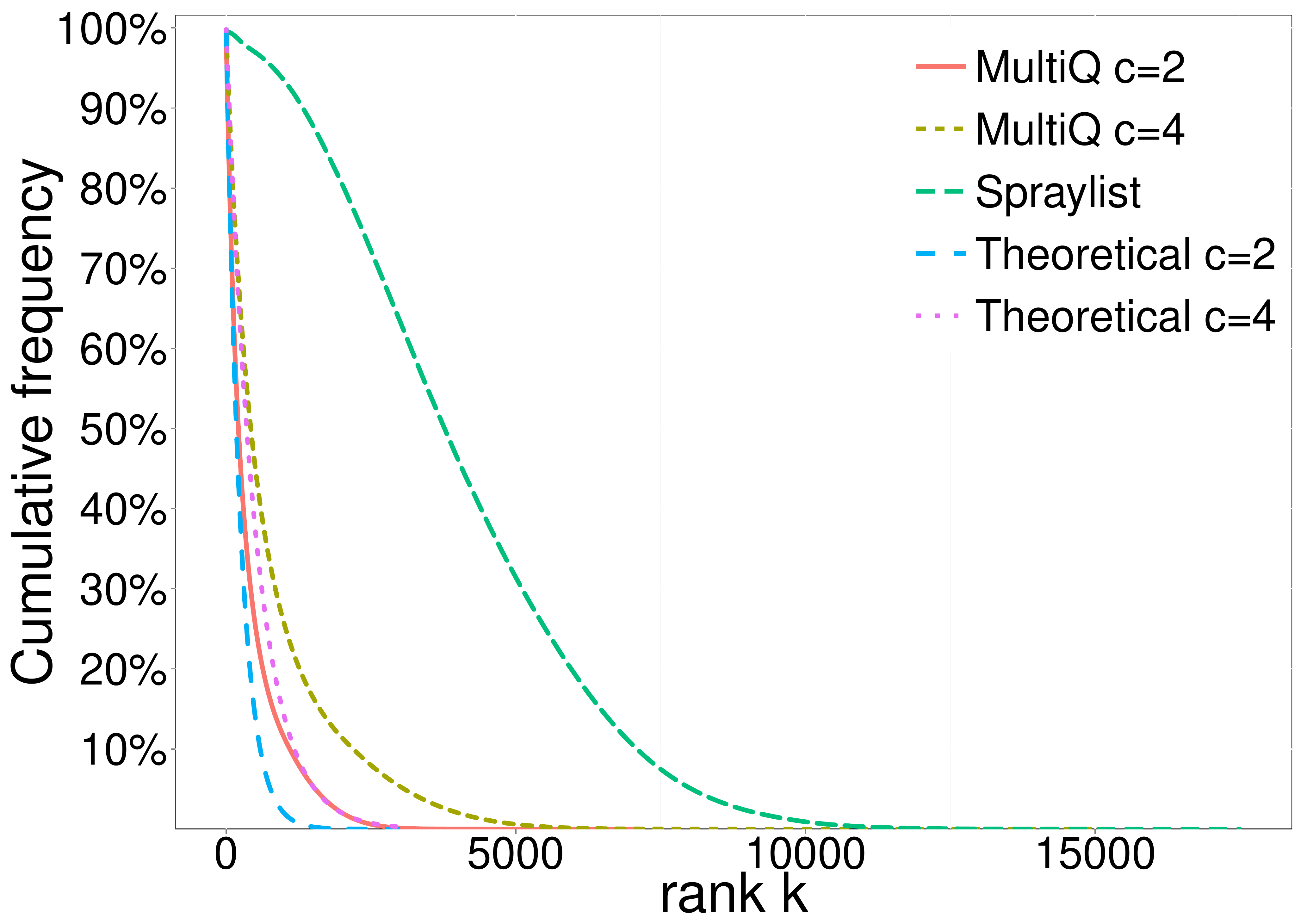}
\caption{Tail distribution of rank error $\prob{\mathrm{rank}>k}$. $10^7$ operations and 256 threads}
\label{fig:quality256}
\end{figure}

\frage{todo: half the x-range of Figure~\ref{fig:quality}. optional: average rank over time. doubleoptional: vary number of sampled queues for \Id{deleteMin}, values 1,2,3,4}

We now measure the distribution of rank errors of MultiQueues and SprayLists. Unfortunately, it is difficult to directly measure this in a running parallel program. We therefore adapt the approach from \cite{AKLS14} to our situation. We run a sequential thread performing the operations which keeps a sorted sequence $S$ containing the current queue elements on the side. After each \Id{deleteMin} returning $x$, the position of $x$ in $S$ gives the rank error.
Figure~\ref{fig:quality} shows the result for $p=56$ and $10^7$ operations (alternating between insertions and deletions). For the ``typically seen'' rank, MultiQueues with $2p=112$ queues ($c=2$) are much better than SprayLists and actually, the measured distribution closely follows the simplified theoretical analysis from Section~\ref{ss:analysis}.
The tails of the distribution are longer however, but still better than for SprayLists.
The largest observed ranks were 1\,522 for MultiQueues and 1\,740 for SprayLists. For $c=4$ the situation is similar, with the exception that the tails of the distribution are worse than for SprayLists.
Figure~\ref{fig:quality256} shows the same experiment for $p=256$.
Again, the distribution follows the theoritical analysis for small ranks.
The advantage over SprayLists grows. Now even MultiQueues with $c=4$ have a distribution with shorter tails.

In the appendix, we give a table with concrete numbers for quartiles of the rank error distribution.


We have also developed a more realistic approach to measuring quality. On a single socket, it is possible to read out a high resolution synchronous clock using the RDTSCP instruction on our test system.
We are recording all operations together with a time stamp. This log can then be used to feed the sequential simulator above -- processing the operations in time-stamp order. We omit the results here since this approach does not scale
to the parameters used in the above measurements. However, the results were generally similar with MultiQueues winning over SprayLists.

\section{Conclusions and Future Work}
\label{s:conclusions}

MultiQueues are a simple way to reduce contention in relaxed priority queues leading to high throughput and good quality. However, there is a need for further research. On the one hand, we would like to have higher quality in particular for the tails of the rank distribution together with provable performance guarantees. We believe that this is an interesting open problem for theoretical research: Devise a protocol for accessing multiple priority queues 
that remains simple and fast but comes with provable quality guarantees.
Parallel priority queues based on multiple queues suggest that this should be possible \cite{San98a} but we want to avoid the frequent global synchronization and rather large implicit constant factors in that approach.

From a practical point of view, we would like to make MultiQueues more scalable over multiple sockets. Possible issues are expensive memory accesses on remote NUMA-nodes and invalidation overheads for cache lines accessed during local queue operations. With respect to this, heaps are not ideal, since a \Id{deleteMin} modifies a logarithmic number of memory locations. Hence, we might try data structure with constant average or amortized update cost. SkipLists might be one option but from our experience \cite{DKMS04} BTrees might be better if we can tailor them for the particular access pattern found in priority queues.


MultiQueues have tunable quality via the number $cp$ of queues. 
It is interesting to consider what happens when we choose $c<1$.
In applications where queue accesses constitute less than a fraction $c$ of the total work, this is likely to work well giving us high quality almost for free. However, if a high operation rate is attempted, $c<1$ will lead to high overhead due to failed locking attempts.  In that case, it would help to introduce a backoff period after failed locking attempts.

Actually, similar problems seem to apply to other relaxed priority queues as well (or even other relaxed data structures like relaxed FIFOs): The application has to specify the degree of relaxation up-front. This has to be done in a rather conservative way since allowing too little relaxation will degrade performance. Another interesting challenge is to design more adaptive data structures that automatically 
and dynamically find the smallest amount of relaxation needed to keep contention low. For example, one could make MultiQueues adaptive by monitoring the operation rate and periodically resize the queue array $Q$ accordingly. 
However, implemented naively, this may be too slow to adapt -- in particular if the queues are large. 

Our technique for wait-free locking may also be applicable in other situations, e.g. for other relaxed data structures. For example, when we apply our technique to relaxed FIFOs, we get a data structure similar to the DQs by Haas et al. \cite{HLHPSKS13} but we can use any sequential FIFO-queue for the individual queues.

\section*{Acknowledgements}
We would like to thank the SAP HANA group for bringing the three authors together,
and Jean-Fran\c{c}ois M\'ehaut for providing a testing platform.

\bibliographystyle{abbrv}
\bibliography{diss,references}
\clearpage
\appendix
\section{Throughput on Other Platforms}
\begin{figure}[h]
\centering\includegraphics[width=0.8\linewidth]{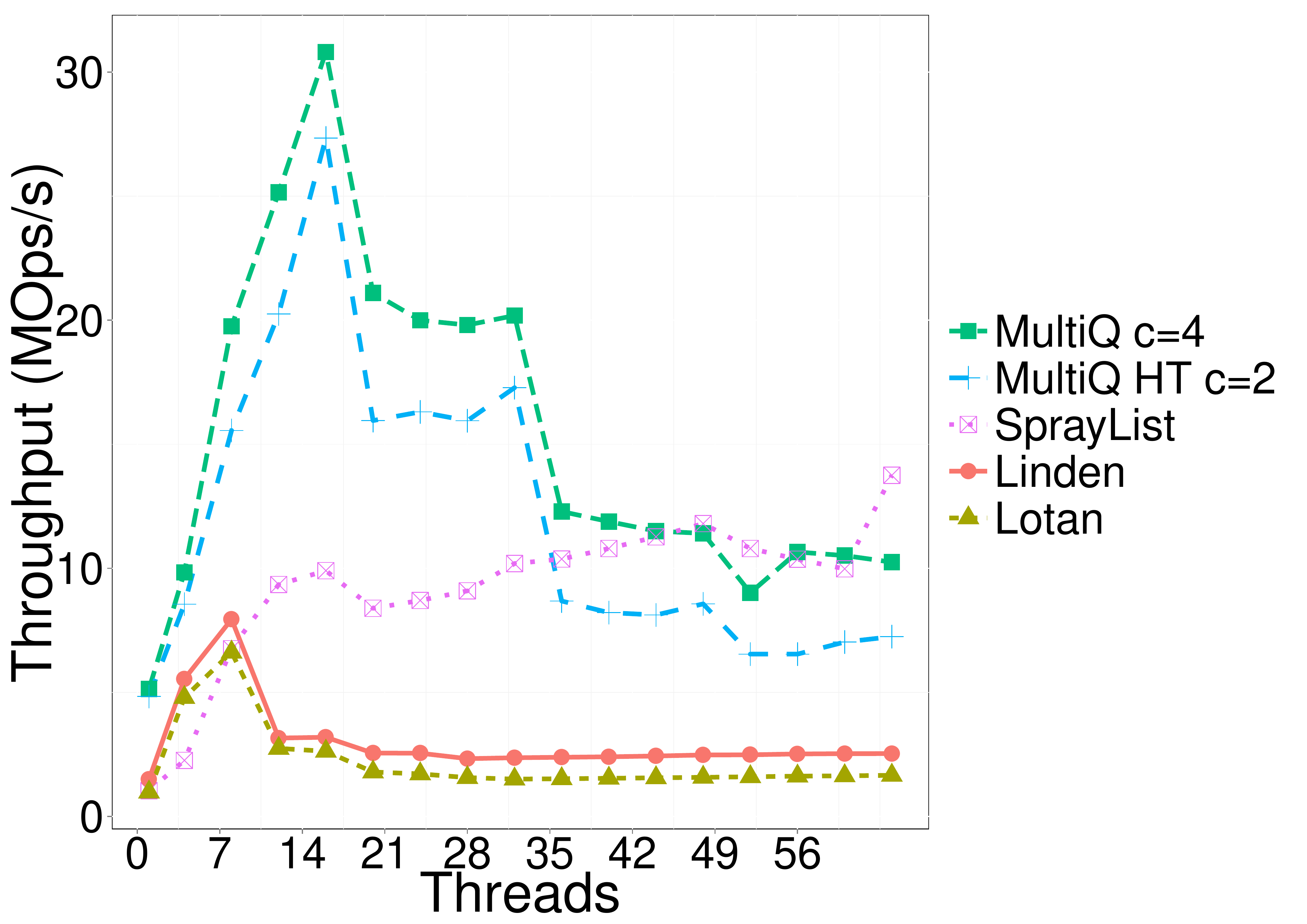}
\caption{Throughput of 50\% \Id{insert} 50\% \Id{deleteMin} operations on a 4 sockets Intel machine.} 
\label{fig:socket4ht}
\end{figure}
\begin{figure}[h]
\centering\includegraphics[width=0.8\linewidth]{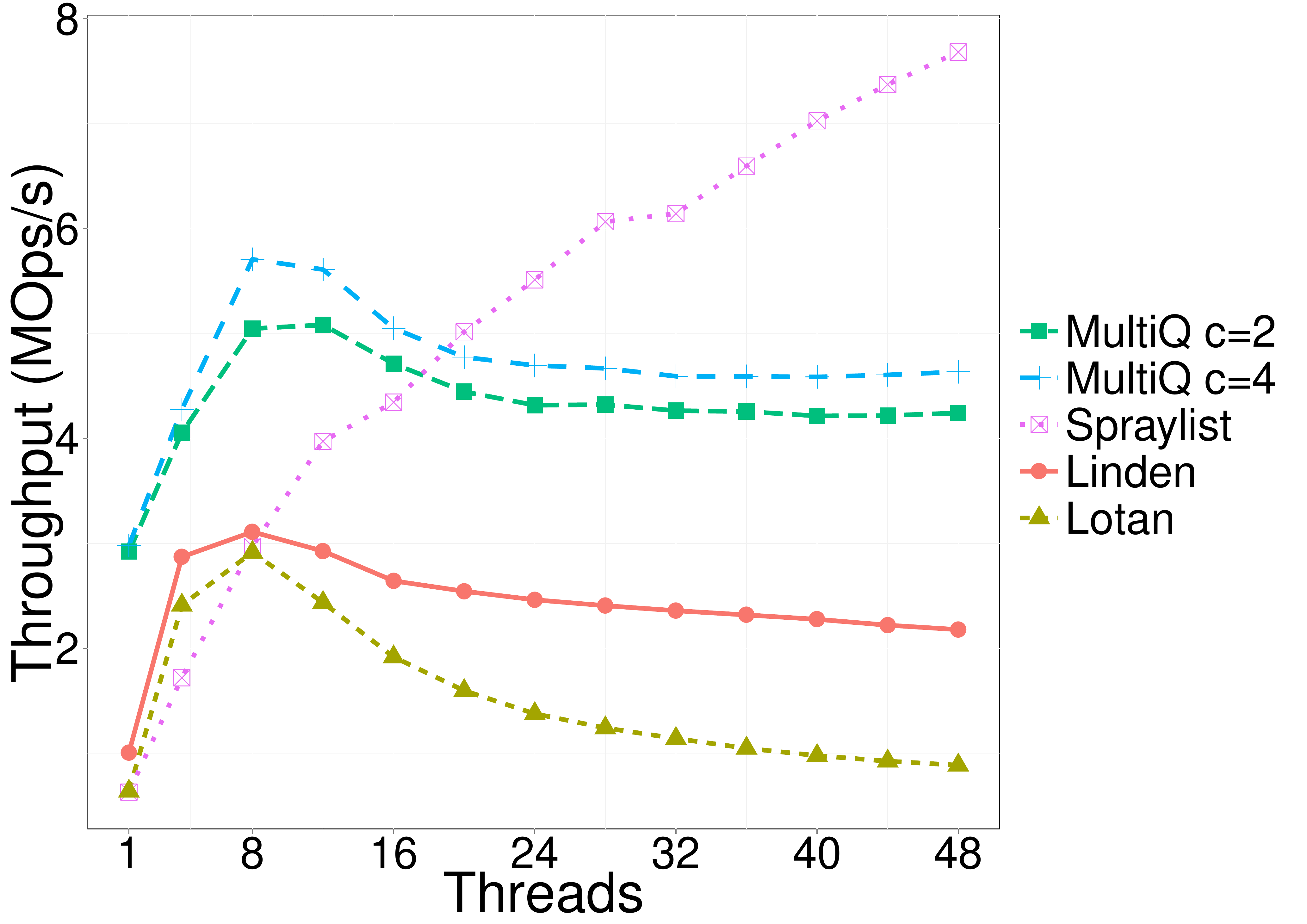}
\caption{Throughput of 50\% \Id{insert} 50\% \Id{deleteMin} operations on a 4 sockets AMD machine} 
\label{fig:4socketAMD}
\end{figure}

Figure~\ref{fig:socket4ht} shows performance for a machine with 
four Intel Xeon E5-4640 processors with 2.4 GHz. Each processor has eight cores with two hardware threads each resulting in 64 hardware threads overall.
The four processors of this machine are connected by single QPI links into a ring network.
For MultiQueues, we first use all hardware threads in a socket before going to the next one. With this configuration, scalability is very good for a single socket. The performance drop for the second socket is considerable. Note that this drop is actually much bigger than for a two-socket machine since only a single QPI link connects two neighboring sockets of our machine whereas a two-socket machine has two parallel QPI links. Once we go to three or four sockets, performance degrades even further. In this configuration the (only partially implemented) SprayList is actually faster.
Figure~\ref{fig:4socketAMD} shows similar on a machine with four AMD Opteron 6168 sockets with 1.9 GHz and 12 cores per socket which confirms the previous results. The performance drops once we go off the first socket.
It is also worth comparing the absolute performance of this machine from 
2010 with the 2014 machine from Figure~\ref{fig:throughput}. The latter has only
6 \% more transistors ($2\times 3.84\cdot 10^9$ versus $4\times 1.8\cdot 10^9$) 16 \% more cores and 37 \% faster clock but, overall, performance is an order of magnitude better. 
\section{Quality}
Table~\ref{tab:quality} gives more details about the rank error distribution. For 10 million \Id{deleteMin} operations and with 56 threads, 75\% of deleted elements have ranks $<= 111$ for $c=2$ and $<=223$ for $c=4$ which is almost 4x and 2x respectively smaller than Spraylist. The highest rank obtained is smaller in the case of MultiQueues with $c=2$ compared with Spraylist but it is greater in the case of $c=4$. However, for 256 threads, the rank error distribution for Spraylist increases faster than MultiQueues.
\begin{table}[h]
\centering
  \begin{tabular}{ l | r   r   r   r  r| r}
    \hline
 \ & 0\% & 25\% & 50\% & 75\% & 100\% & $p$\\ 
    \hline
MultiQueues( c= 2) & 0 & 17  & 46 & 111 & 1\,522 & 56 \\
MultiQueues (c=4) & 0 & 34 & 92 & 223 & 3\,035 & 56\\
Spraylist & 0 & 129 & 249 & 406 & 1\,740 & 56\\\hline
MultiQueues( c= 2) & 0 & 79 & 213 & 513 & 7\,043 & 256\\

MultiQueues (c=4) & 0 & 158 & 427 & 1028 & 15\,224 & 256\\

Spraylist & 0 & 2338 & 3771 & 5512 & 17\,524 & 256\\
    \hline
  \end{tabular}
\caption{Quartiles for the distribution function of rank errors. 10 million operations} \label{tab:quality56} 
\label{tab:quality}
\end{table}
\end{document}